\theoremstyle{definition}
\newtheorem{theorem}{Theorem}
\newtheorem{lemma}{Lemma}
\newtheorem{corollary}{Corollary}
\definecolor{purple1}{rgb}{128,0,128}
\newcommand{\bea}{\begin{eqnarray}}
\newcommand{\ea}{\end{eqnarray}}
\definecolor{darkpastelgreen}{rgb}{0.01, 0.75, 0.24}
\begin{document}
\title{A Revisit to Lorentz Transformation without Light}
\author{Satadal Datta} 
\affiliation{%
Seoul National University, Department of Physics and Astronomy, Center for Theoretical Physics, Seoul 08826, Korea}
\date{\today}
\begin{abstract}
We derive rotation free Lorentz Transformation (LT) between two inertial reference frames without using the second postulate of Einstein, i.e., we do not assume the invariant speed of light (in vacuum) under LT. Using principle of relativity, homogeneity of space and time, and isotropy of space, we find a general transformation rule between two inertial frames where a speed, invariant under that transformation, arises naturally. This idea first came into light by Mathematician Ignatowski \cite{von1911relativitatsprinzip} around the year, 1910. Without loss of novelty, here we present our derivation while reviewing more recent literature in this subject. 

\end{abstract}
\maketitle
\section{Introduction} 
Einstein's 1905 work \cite{einstein} (for English translation, see \cite{saha1920principle}) changed the fundamental notions in Physics forever. It all started from the discrepancy between Maxwell's theory on Electromagnetic field and Newtonian Mechanics. There were attempts to save both of these ideas, i.e., by introducing the concept of ether medium as a preferred inertial frame where Maxwell's equations were proposed to be true. After Michelson-Morley's experiment nullified the existence of ether medium, then there were attempts to rescue Ether theory, i.e., the idea of Lorentz-FitzGerald contraction of interferometer arm along the direction of ether wind. Even `Lorentz-FitzGerald contraction' hypothesis was nullified by further experimental results (for Historical review Ref. \cite{resnick1992basic,bohm2015special}). Einstein came up with the novel idea beginning with two postulates. The first one is the principle of relativity, and the second one is constancy of speed of light (in vacuum) with respect to all inertial reference frames, regardless the motion of source; and he derived Lorentz Transformation (LT) equations, more precisely Lorentz Boost (LB) equations. To avoid confusion, whenever we mention LT hereafter, it means LB, there is no spatial rotation involved. It turns out that Maxwell's electromagnetic field equations hold in every inertial frame, but Newton's idea of absolute space and time is not true.  Time dilation of moving clock, length contraction of moving object along the direction of motion, Doppler effect and aberration of light etc \cite{einstein, resnick1992basic, griffiths2005introduction, rosser2017introductory, rindler1991introduction} are direct consequences of LT.

The idea of deriving LT equations only from the Einstein's first postulate is not new. To the best of our knowledge, Ignatowski first came up with this idea \cite{von1911relativitatsprinzip}. Later on, some Physicists came up with their own versions of rederiving Lorentz equation and reviewed Ignatowski's work until recently \cite{gorini,lee,sen,peli,ga,BC,pittphilsci13220} (see for a review on Ignatowski's idea, the Ref. \cite{torretti1996relativity}).

Here, without using the second postulate, we derive Lorentz transformation  from scratch. We begin from the known definitions, we point out the key aspects of Einstein's idea, and its differences from the derivation what we present here. We provide a brief overview on this topic to the readers by mentioning several works which were accomplished by many researchers with regard to Ignatowski's work. Our aim of this paper is to present the derivation to the readers as clear as possible, and with our new additions and insights, we have tried to make the topic simple and interesting. We have also tried to provide an overall understanding by referring to some exciting literature on this subject.

\section{A Brief Review of Definitions and Axioms}
{\it Inertial Reference Frame}, is defined as a reference frame (coordinate system) where Newton's 1st law of motion holds. Newton's 1st law provides us the definition of an inertial frame (IF). Since, we have not yet derived the LT equation, it would be imprecise to assume the validity of Newton's other two laws of motion (see for more discussion on this in the Ref. \cite{griffiths2005introduction}). In Newton's book \cite{newton1833philosophiae}, his idea of absolute space and time is consistent with his three laws of motion. Here we, as Einstein did, do not assume absoluteness of space and time; we start from the simplest set of axioms as mentioned below.

{\it Principle of Relativity} states that if we have an IF and another reference frame which moves with uniform velocity with respect to it, then laws of Nature take same form in both reference frames. Relativity principle asserts that if an inertial frame is given, any other reference frame that moves with uniform velocity with respect to it is itself inertial \cite{fock}.

{\it Homogeneity and Isotropy:-} We also assume homogeneity property of space and time, i.e., space and time intervals transform (from an IF to an another IF) in the same way at all space-time coordinates.
We also assume that there is no preferred spatial direction, i.e., isotropy property of space. 

\noindent
If Nature behaves {\it reasonably} as it appears so far!, we do believe that these assumptions are the only ways of abstraction of the reality, there is no other way. However, without relativity principle one can still define IF by only considering Newton's 1st law of motion to hold in all IFs. In this gedanken scenario, one can also consider linear coordinate transformations between IFs, we would like to recommend the Ref. \cite{VV} to the readers in this regard.

\noindent
For simplicity, we choose Cartesian coordinate system to work with in an IF, $F$. Three space and one time coordinate of a point in $F$ are $(x,y,z)$, and $t$. For a point object, $x=x(t),~y=y(t),~z=z(t)$, as time $t$ {\it flows forward}.


{\it Boost:-} we define boost as a bijective map from an IF to a reference frame moving with uniform velocity relative to it. We observe that bijective property of Cartesian coordinate transformation is {\it very} important because without it, a single Cartesian coordinate in one reference frame may have multiple images in the other.

{\it Length and time intervals:-} The rest length of a rigid rod stationary in $F$ is the Euclidean distance between its two edges which can be measured by a rigid ruler also at rest in $F$. If the rod moves with uniform velocity relative to $F$, the rest length measured in the rod's comoving frame $F'$ will not differ from its rest value in $F$ because of principle of relativity. Time, labelling succession of events, is measured by clock. By time in $F$, we mean time reading of a clock which is stationary at $F$. Similar to the length measurement, rest time measurement is independent of the clock's uniform velocity with respect to an IF.
The length of a moving rod in $F$ is the Euclidean distance between the $(x,y,z)$ coordinates of its two edges while the coordinates of the two edges are recorded simultaneously in $F$. To make such a simultaneous measurement, we need the stationary observers in $F$ to use the same label as succession of events in time measurement. We assume that irrespective of the stationary observers' positions in space, they agree on time measured by their identical {\it already} synchronized clocks. Basically time is just a coordinate $t$, independent of space coordinates $(x,y,z)$, just like $x$, $y$, $z$ are independent from each other. In the similar fashion, in the reference frame $F'$ moving with uniform velocity relative to $F$, the corresponding spacetime coordinate is $(x',y',z',t')$. However, there is a way to {\it practically} synchronize clocks at different locations in an IF as Einstein described in his work \cite{einstein}. In the next section, we first briefly review Einstein's idea. 

\section{Einstein's view} Einstein did a thought experiment to define synchronism of clocks at different locations in space. There are two observers at two different locations $A$ and $B$ in $F$, each of them has a clock at their respective locations. As a matter of simplicity, whenever we mention observer/clock/location in a reference frame, it means that observer/clock/location is stationary in that reference frame. The reading of the clock of the first observer at location $A$ is $t_{\rm A}$ just when a light signal from $A$ travels towards location $B$; the light signal is reflected from a stationary mirror at $B$ at $t=t_{\rm B}$ of the second observer's clock, the reflected light signal travels towards $A$, observer at $A$ receives the light signal at $t=t_{\rm A '}$ according to their clock. The  clocks at $A$ and $B$ are identical and synchronized if and only if
\begin{equation}\label{sy}
t_{\rm B}-t_{\rm A}=t_{\rm A'}-t_{\rm B}.
\end{equation}
The equality comes from the isotropy property of space; light signal propagates with same speed in all direction in space. In our case, we can still use light signal to synchronize clocks in an IF as described in above because in this thought experiment involving one IF, Einstein's second postulate is not used. If we {\it strictly} do not even want to use light, as some readers may prefer, one may imagine elastic collision between an object and a wall in place of light signal and the mirror; we leave this up to the reader.
Evidently, synchronism has transitive property, i.e., if a clock at location $A$ in $F$ and a clock at location $B$ in $F$ are synchronized with each other, and if the clock at $B$ is synchronized with another clock at location $C$ in $F$, then the clock at location $A$ and the clock at location $C$ are also synchronized with each other. This is how we establish an agreement on time readings of all stationary observers situated at all locations in $F$. 
If the relation \eqref{sy} is satisfied, the distance between location $A$ and $B$ is 
\begin{equation}
l_{AB}=c(t_{\rm B}-t_{\rm A})=\frac{c}{2}(t_{\rm A'}-t_{\rm A}),
\end{equation}
where $c$ is the speed of light in vacuum. The thought experiment is assumed to be performed in vacuum.
The time readings in $F'$ of the successive events in the aforementioned experiment performed in $F$ are $t'=t'_{A}$, $t'=t'_{B}$, and $t'=t'_{A'}$ respectively where $F'$ has a nonzero uniform velocity relative to $F$, and also let's assume the equality of Eq. \eqref{sy} holds, i.e., clocks at the locations $A$, and $B$ in $F$ are synchronized. Einstein found  \cite{einstein,saha1920principle} that the observers in $F'$ disagree on the simultaneity of the clocks in $F$ because the light signal used to check simultaneity in $F$ also travels with the same speed in $F'$ (the second postulate in action), 
\begin{equation}\label{sy2}
t'_{\rm B}-t'_{\rm A}\neq t'_{\rm A'}-t'_{\rm B}.
\end{equation}
Simply putting, observers in $F$ and the observers in the moving frame $F'$, label the succession of events differently. This is briefly the starting key idea of Einstein's derivation of LT. 

\section{Derivation without light}
Let's imagine a reference frame $F'$, equipped with three mutually perpendicular axes $X'$, $Y'$, $Z'$, is moving with a uniform velocity $v$ along the $X$ axis of $F$ (from now on we drop the word `IF' before $F$ and $F'$ for brevity). At $t=0$, the origin $O$ of $F$ coincides with the origin $O'$ of $F'$ at $t'=0$, i.e., $(x=0,y=0,z=0,t=0)$ in $F$ is $(x'=0,y'=0,z'=0,t'=0)$ in $F'$. Since boost is a bijective function, for any given $(x',y',z')$ in $F'$, there is a unique $(x,y,z)$ at each instant of time $t$ in $F$; and vice versa. Therefore, an observer in $F$ concludes that in their frame the moving origin $O'$ of $F'$ {\it uniquely} coincides with their origin $O$ at $t=0$, and observer in $F'$ concludes that in their frame the moving origin $O$ of $F$ {\it uniquely} coincides with their origin $O'$ at $t'=0$.

\noindent
Just like the origins of the two coordinate systems, we assume that another point $X'^*$ on the $X'$ axis of $F'$ coincides with a point $X^*$ on the $X$ axis of $F$ at $t=0$ and vice versa due to bijectivity. Since every stationary point in $F'$ is moving with uniform velocity in the positive $X$ direction of $F$, therefore coincidence of two points of two straight lines results in coincidence of those two straight lines as a whole (from Euclidean geometry).  In the same logic we can make the other two mutually perpendicular axes $Y'$, $Z'$ perpendicular to $X'$ axis of $F'$ to coincide with the $Y$, $Z$ axes respectively at $t=0$ in $F$.  At any $t>0$ in $F$, every stationary point in $F'$ has moved distance $vt$ along the $X$ axis, thus observers in $F$ {\it see} (more appropriate word instead of `see' is `measure' since we do not have light; we take these two words ambiguously for simplicity) boost as a uniform translation motion. Therefore, $X$ axis of $F$ and the $X'$ axis of $F'$ coincide with each other at any $t$ of $F$ and at any $t'$ of $F'$; the $Y$, $Y'$ axes remain parallel to each other at any $t$ of $F$ and at any $t'$ of $F'$; and the statement as for $Y,~Y'$ axes holds for $Z$, $Z'$ axes as well due to symmetry.

As a result, $X-Z$ plane of $F$ and $X'-Z'$ plane of $F'$ coincide with each other at any $t$ of $F$ and at any $t'$ of $F'$, and the same statement holds for the $X-Y$ plane of $F$ and $X'-Y'$ plane of $F'$; and on the other hand $Y-Z$ plane of $F$ and $Y'-Z'$ plane of $F'$ remain parallel to each other at any $t$ of $F$ and at any $t'$ of $F'$.

Further on, due to homogeneity of space and time, the bijective function which describes boost has to be a linear function. From the above discussion, the most general form of transformation from $F$ to $F'$ which one can imagine is:
\begin{eqnarray}
& x'=\gamma (v)(x-vt) \label{x1}\\
& y'=\phi (v) y, \label{y1}\\
& z'=\phi (v) z, \label{z1}\\
& t'=a(v)t +b(v)x+c(v)y+d(v)z, \label{t1}
\end{eqnarray}
where $\gamma(v)$, $\phi(v)$, $a(v)$, $b(v)$, $c(v)$, $d(v)$ are real functions of $v$.
\begin{lemma}\label{lemma 1}
$\gamma(v=0)=\phi(v=0)=a(v=0)=1$, $b(v=0)=c(v=0)=d(v=0)=0$, and $\gamma (v)>0$, $\phi(v)>0$.
\end{lemma}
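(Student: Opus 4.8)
The plan is to separate the two claims: first nail down the six values at $v=0$ from the observation that a zero-velocity boost must be the identity, and then obtain the two sign conditions from bijectivity (which forces $\gamma,\phi$ to be nonzero) combined with their known value at $v=0$.

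For the $v=0$ values, note that when $v=0$ the frame $F'$ has no motion relative to $F$; by the construction above its origin stays coincident with $O$ and its axes stay aligned with those of $F$, so $F'$ is the very same coordinate system as $F$. Every event then has identical coordinates in the two frames, $x'=x$, $y'=y$, $z'=z$, $t'=t$. Setting $v=0$ in \eqref{x1}--\eqref{z1} immediately gives $\gamma(0)=1$ and $\phi(0)=1$, and demanding that $t'=t$ hold in \eqref{t1} identically in the independent coordinates $x,y,z,t$ forces $a(0)=1$ and $b(0)=c(0)=d(0)=0$ by matching coefficients.

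For the sign conditions I would exploit that the boost is a bijective linear map, hence has nonvanishing determinant. A direct computation of the Jacobian of \eqref{x1}--\eqref{t1} yields $\det=\gamma(v)\,\phi(v)^2\,\bigl(a(v)+v\,b(v)\bigr)$, so bijectivity requires $\gamma(v)\neq 0$ and $\phi(v)\neq 0$ for every $v$. Assuming the coefficient functions are continuous in $v$ --- a mild regularity hypothesis --- the nonvanishing continuous functions $\gamma$ and $\phi$ cannot change sign; since $\gamma(0)=\phi(0)=1>0$, the intermediate value theorem gives $\gamma(v)>0$ and $\phi(v)>0$ throughout. A continuity-free alternative is an orientation argument: a point at rest in $F$ lying ahead of the moving origin $O'$ satisfies $x-vt>0$ and must be recorded at $x'>0$ (the $X,X'$ axes being co-directed), whence \eqref{x1} gives $\gamma(v)>0$; and a point with $y>0$ lies on the positive $Y'$ axis because the $Y,Y'$ axes were shown to stay parallel and co-directed, so \eqref{y1} gives $\phi(v)>0$.

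I expect the main obstacle to be the sign conditions rather than the $v=0$ values, because neither route is purely algebraic. The statement ``$F'$ equals $F$ when $v=0$'' and the statement ``ahead of $O'$ means $x'>0$'' are geometric inputs that must be justified from the axioms --- isotropy, the parallel- and co-directed-axes construction, and bijectivity --- rather than read off from the equations. If one takes the determinant route, the residual burden is to argue that $\gamma$ and $\phi$ depend continuously on $v$; if one takes the orientation route, the burden is to make precise that the boost preserves the co-orientation of each coordinate axis, which again rests on the geometric picture established above.
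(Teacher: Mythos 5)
Your proposal is correct, and the first half coincides with the paper: the paper also disposes of the six $v=0$ values in one line by declaring that $v=0$ must be the identity transformation (you merely spell out the coefficient matching in \eqref{t1} that the paper leaves implicit). Where you genuinely diverge is the positivity of $\gamma$ and $\phi$. The paper argues from homogeneity/isotropy of space: if $\gamma(v)<0$, then \eqref{x1} reverses the sign of $x-vt$, so the observers in $F'$ perceive the $Y$--$Z$ plane of $F$ at $t=0$ as a mirror plane, singling it out among all planes parallel to it, which contradicts the requirement that all such planes have equal status; the same argument gives $\phi(v)>0$. Your second (orientation) route is the closest relative of this --- ``a reflection would distinguish a plane'' versus ``a reflection would reverse the co-orientation of an axis'' are two faces of the same geometric fact, and you correctly identify that the residual burden is to extract co-directedness of the axes from the setup, which is exactly the work the paper's mirror-plane sentence is doing. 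Your first (determinant) route is genuinely different: the computation $\det=\gamma(v)\,\phi(v)^2\,(a(v)+v\,b(v))$ is right and bijectivity does force $\gamma,\phi\neq 0$, but promoting this to strict positivity requires continuity of the coefficient functions in $v$, a regularity hypothesis the paper never invokes (its axioms are homogeneity, isotropy, relativity, and bijectivity only). So the determinant route buys a cleaner, more algebraic argument at the price of an extra assumption, while the paper's (and your orientation) route stays within the stated axioms at the price of a geometric argument that must be made precise. Either way the lemma stands.
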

\begin{proof}
Evidently $v=0$ has to correspond to the identity transformation. 
If at some $v$, $\gamma (v)$ is negative, then that implies that the relation \eqref{x1} makes $Y-Z$ plane of $F$ at $t=0$ very special to the observers in $F'$ because they perceive $Y-Z$ plane of $F$ as a mirror whereas all planes parallel to $Y-Z$ plane in $F$ must have equal preference at any $t$, at any $v$. Hence we have $\gamma (v)>0$. In the same spirit, $\phi (v)>0$.
\end{proof}
\begin{lemma}\label{lemma 2}
The assumption of homogeneity is compatible with relativity principle. 
\end{lemma}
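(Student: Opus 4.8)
The plan is to make precise the sense in which the linear form \eqref{x1}--\eqref{t1}, already obtained from homogeneity, does not clash with the relativity principle. The relativity principle demands complete symmetry between $F$ and $F'$: since $F'$ moves with velocity $v$ along the $X$ axis of $F$, the frame $F$ moves with velocity $-v$ along the $X'$ axis of $F'$, so the transformation carrying $(x',y',z',t')$ back to $(x,y,z,t)$ must be obtained from \eqref{x1}--\eqref{t1} by the replacement $v \to -v$. I would encode the forward map as a matrix $T(v)$ acting on the column $(t,x,y,z)^{\mathsf T}$, so that the relativity principle asserts that the inverse map is $T(-v)$.

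First I would note that, because the boost is bijective and linear, the inverse map is also given uniquely by the matrix inverse $T(v)^{-1}$. Compatibility of homogeneity with the relativity principle is then exactly the statement that these two descriptions coincide, i.e. $T(-v)=T(v)^{-1}$, or equivalently that performing the $v$-boost followed by the $(-v)$-boost returns every event to itself, $T(-v)\,T(v)=\mathbb{1}$. My strategy is to impose this single matrix equation and verify that it admits a consistent solution within the class of linear maps \eqref{x1}--\eqref{t1}, rather than forcing any position dependence or outright contradiction.

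Carrying out the product block by block, I expect the transverse rows to give $\phi(v)\phi(-v)=1$, which is consistent with $\phi(v)>0$ from Lemma \ref{lemma 1}. The $x$-row should force the two off-diagonal time coefficients to vanish, $c(v)=d(v)=0$, and should identify $a(v)=\gamma(v)$ together with a relation tying $b(v)$ and $\gamma(v)$ across $v$ and $-v$; the time row then yields the remaining reciprocity relations among $\gamma(-v)\gamma(v)$, $b(v)$ and $b(-v)$. I would finish by exhibiting that this system of functional equations is solvable --- it is satisfied by the identity at $v=0$ of Lemma \ref{lemma 1} and persists for $v\neq 0$ --- which shows that homogeneity imposes nothing beyond relations already consistent with the relativity principle.

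The main obstacle is conceptual rather than computational: I must justify, using isotropy together with the relativity principle, that the return transformation is \emph{precisely} $T(-v)$ and not some other linear map with independent coefficients. Once that identification is granted, checking that $T(-v)\,T(v)=\mathbb{1}$ possesses solutions is routine; the delicate point is to argue that this requirement does not over-determine the functions $\gamma,\phi,a,b,c,d$ into an inconsistency, which is what the word \emph{compatible} in the statement really asks us to establish.
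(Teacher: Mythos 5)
Your proposal reads ``compatible'' as the joint solvability of the linearity constraints with the reciprocity condition $T(-v)=T(v)^{-1}$, and you correctly flag that the whole argument hinges on first justifying that the return map is exactly $T(-v)$. That is precisely the gap: the identification of the inverse transformation with the $v\to -v$ transformation is the velocity reciprocity principle, which is not available at the point of Lemma \ref{lemma 2} --- it is only established afterwards, in Theorem \ref{thm1}, using isotropy and the absence of a preferred frame; and the block-by-block verification you sketch ($\phi(v)\phi(-v)=1$, $c(v)=d(v)=0$, $a(v)=\gamma(v)$, the relation for $b(v)$) is exactly the content of Corollary \ref{cor2}. So as a proof of Lemma \ref{lemma 2} your argument is both incomplete (the ``main obstacle'' you name is left unresolved) and circular in placement (it consumes results the lemma is supposed to precede).

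The paper's own proof is far more elementary and does not touch reciprocity at all. It observes that homogeneity forces the boost to be linear; a linear map of spacetime sends straight worldlines to straight worldlines, so uniform rectilinear motion in $F$ is seen as uniform rectilinear motion in $F'$; hence Newton's first law holds in $F'$ and $F'$ is an inertial frame by definition. Since this is exactly what the relativity principle independently asserts about $F'$, the two assumptions agree rather than conflict --- which is all the lemma claims. If you want to keep your matrix formulation, the honest version of Lemma \ref{lemma 2} at this stage is only the statement that $T(v)$ is linear and invertible, hence preserves the class of inertial observers; the functional equations you list belong to Corollary \ref{cor2}, after Theorem \ref{thm1} has earned you $T(v)^{-1}=T(-v)$.
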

\begin{proof}
Homogeneity property asserts that coordinate transformation between an IF and a reference frame which moves with uniform velocity with respect to it, is linear. Since linear transformation preserves rectilinear motion, therefore a reference frame which moves with uniform velocity with respect to an IF, is also an IF. If we start with an inertial frame $F$, and do not declare $F'$ an IF apriori, homogeneity assures $F'$ to be an IF where Newton's first law also holds. According to relativity principle $F'$ is also an IF. Therefore, homogeneity property of space and time is compatible with principle of relativity. 
\end{proof}
Relativity principle alone implies the spacetime coordinate transformation between two IFs is necessarily projective function (ratio of two linear functions); linear transformation is a special case, one can however restrict themselves to linear transformations for physical reasons \cite{rindler}. 
\begin{theorem}\label{thm1}
$F$ moves with uniform speed $v$ relative to $F'$ along the negative $X'$ axis. \end{theorem}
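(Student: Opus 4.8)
The plan is to track the worldline of the origin \(O\) of \(F\) as recorded in \(F'\), and then to upgrade the resulting kinematic statement into one about speed by invoking reciprocity.

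First I would substitute the coordinates of \(O\), namely \(x=y=z=0\) for all \(t\), into the transformation \eqref{x1}--\eqref{t1}. This gives \(x'=-\gamma(v)\,v\,t\), \(y'=z'=0\), and \(t'=a(v)\,t\). Because \(y'=z'=0\) at every instant, \(O\) never leaves the \(X'\) axis, so its motion is purely along \(X'\); and since \(\gamma(v)>0\) (Lemma \ref{lemma 1}), \(a(v)>0\) (by continuity from \(a(0)=1\), assuming the boost preserves the direction of time), and \(v>0\), we get \(x'<0\) for \(t>0\), i.e. \(O\) recedes along the \emph{negative} \(X'\) axis. Eliminating \(t\) yields the straight line \(x'=-\big(\gamma(v)/a(v)\big)\,v\,t'\) through the origin, so the motion is uniform with velocity \(-\big(\gamma(v)/a(v)\big)v\). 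This already establishes the \emph{uniform} and \emph{negative \(X'\) direction} parts of the claim.

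It remains to show the speed equals \(v\), i.e. \(\gamma(v)=a(v)\); this is the crux and cannot be read off from linearity and Lemma \ref{lemma 1} alone. Here I would appeal to the principle of relativity together with isotropy. Relativity says the inverse boost \(F'\to F\) has the same functional form as \eqref{x1}--\eqref{t1}, now carrying the velocity \(v^{*}\) of \(F\) relative to \(F'\); composing a boost with its inverse then shows \(v\mapsto v^{*}\) is an involution, \((v^{*})^{*}=v\). Isotropy, implemented by the spatial reflection \(x\mapsto -x\), \(x'\mapsto -x'\) (which forces \(\gamma\) and \(a\) to be even and \(b\) odd in \(v\)), shows the same map is odd, \((-v)^{*}=-(v^{*})\). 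Finally, the direct computation gives \(v^{*}=-(\gamma/a)v\), which is continuous with \(0^{*}=0\); an odd continuous involution fixing \(0\) with this sign behaviour cannot be strictly decreasing (a nontrivial continuous involution on \((0,\infty)\) would blow up as the argument tends to \(0\)), so it must be \(v\mapsto -v\). Hence \(v^{*}=-v\) and \(\gamma(v)=a(v)\), giving speed exactly \(v\).

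The main obstacle is precisely this last step, the reciprocity of relative velocity: the bare transformation yields only \(-(\gamma/a)v\), and proving \(\gamma=a\) genuinely requires both the relativity principle (to give the inverse its boost form and the involution property) and isotropy (to give oddness), with continuity at \(v=0\) from Lemma \ref{lemma 1} closing the argument. If one is willing to accept reciprocity as an immediate consequence of the symmetric, preferred-frame-free relation between \(F\) and \(F'\), the statement collapses to the one-line observation that \(F\) must measure \(F'\) to move at the same speed with which \(F'\) measures \(F\); making that symmetry precise is the real work.
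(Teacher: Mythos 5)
Your proof is correct, but it takes a genuinely different route from the paper's. The paper argues almost entirely verbally: uniformity of the reciprocal motion is inferred from the requirement that the inverse of \eqref{x1}--\eqref{t1} be linear, and the equality of the two speeds is obtained by contradiction --- a disagreement between the two measured speeds would either single out a preferred inertial frame or distinguish the positive from the negative $X'$ direction, violating relativity and isotropy. You instead compute the worldline of $O$ in $F'$ explicitly, obtain the candidate reciprocal velocity $v^{*}=-(\gamma/a)\,v$, and then pin it down to $-v$ by a functional-equation argument: relativity makes $v\mapsto v^{*}$ an involution, isotropy (via spatial reflection) makes it odd, and continuity at $v=0$ (Lemma \ref{lemma 1}) excludes the strictly decreasing branch. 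This is essentially the Berzi--Gorini reciprocity argument, which the paper only cites in the remark following the theorem rather than carrying out. Your version buys rigour: it isolates exactly which hypotheses (involution, oddness, continuity) do the work, and it exposes a mild extra assumption the paper silently needs, namely $a(v)>0$ (preservation of the direction of time), without which the reciprocal velocity is not even finite. The paper's version buys brevity and physical transparency, at the cost of leaving the ``no preferred frame'' step as an appeal to symmetry rather than a derivation.
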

\begin{proof}
From the earlier discussion, it is quite evident that an observer in $F'$ also sees that reference frame $F$ is moving in a translation motion along $X'$ axis. Not only the motion of $F$ with respect to $F'$ has to be a translation motion but also it is a motion with uniform velocity because otherwise the inverse relation of the Eq. \eqref{x1}-Eq. \eqref{t1} would not be linear which is Mathematically impossible. This argument can also be made in an alternative way. We start from the definition of IF. If a reference frame is inertial, any other reference frame which moves with uniform velocity is an IF. If $F$ moves in a nonuniform translation motion (motion with some acceleration) with respect to $F'$, then it becomes impossible for both of the frame to be inertial. This puts reference frame $F$ with a very special frame such that observer in $F$ concludes that $F'$ moves with uniform motion, hence $F'$ is an IF relative to $F$ but the observers in $F'$ see $F$ as a noninertial frame. Since, all IFs must have equal status, therefore both $F$ and $F'$ are IFs. With this in mind and from the discussions about the transformation equations \eqref{x1}-\eqref{t1}, we now conclude that $F$ moves with some uniform speed $v'$ along the negative $X'$ axis of $F'$. For $v'\neq v$, $F$ relative to $F'$ moves either slower or faster than $F'$ relative to $F$. This disagreement of speed between the observers in $F$ and $F'$ implies that either between $F$ and $F'$, there is a special preferred IF, or positive $X'$ and negative $X'$ directions are not in same footing. Since there is no preference between inertial frames, and space is isotropic, hence $v'=v$. 
\end{proof}
This is velocity reciprocity principle, a crucial step in the derivation, here we present a proof by contradiction $3+1$ spacetime dimension. Einstein in his derivation proved this principle. As said already, Einstein's derivation begins with the observation of mismatch of simultaneity between two IFs, that provides the time transformation equation between two IFs. However, in our case, we do not have a very specific form of time transformation equation \eqref{t1} yet. We arrive at the velocity reciprocity principle without using Lorentz invariant speed of light. In some literature, this principle is assumed to begin with \cite{moller1972theory,lee,BC,doi:10.1119/10.0010234}. On a note, we would like to suggest the readers to have a look at the following references which discuss the necessity of isotropy property of space for the validity of velocity reciprocity principle:-Ref. \cite{gorini,1970JMP....11.2226G, 1971CMaPh..21..150G, ga,doi:10.1119/10.0009219,schroder1990special}. 
\begin{corollary}\label{cor1}
If we construct another reference frame $F ''$ moving with uniform velocity with respect to $F'$, $F''$ is an IF moving with uniform velocity with respect to $F$.
\end{corollary}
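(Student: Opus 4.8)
The plan is to establish the corollary as a \emph{closure} (transitivity) property of boosts, using only that boosts between inertial frames are linear bijections together with Lemma~\ref{lemma 2}. The essential observation is that the composition of two linear bijective maps is again a linear bijective map, and that such maps send straight worldlines to straight worldlines; promoting $F''$ to inertial status then follows from the very argument already used in Lemma~\ref{lemma 2}.

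First I would fix the status of the intermediate frame. Since $F'$ moves with uniform velocity relative to the inertial frame $F$, Lemma~\ref{lemma 2} already guarantees that $F'$ is itself an IF, and the boost $F\to F'$ is the linear bijection recorded in Eqs.~\eqref{x1}--\eqref{t1}. Next I would invoke homogeneity a second time, now \emph{within} $F'$: because $F'$ is an IF and $F''$ moves with uniform velocity relative to it, homogeneity of space and time applies exactly as it did for $F$, so the boost $F'\to F''$ is also a linear bijection (of the same general form, though not necessarily with motion along $X'$).

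I would then compose the two maps. Writing each boost as an invertible linear map on the spacetime coordinates $(t,x,y,z)$, the composite $F\to F''$ is again linear and bijective, being a product of invertible linear maps. To read off the motion of $F''$ in $F$, I would consider any point at rest in $F''$: in $F''$ coordinates its worldline is the straight line on which $(x'',y'',z'')$ is constant while $t''$ varies. A linear bijection carries this line to a straight line in the $(t,x,y,z)$ coordinates of $F$; since the map is bijective its $t$--component cannot be constant along the image, so $t$ varies monotonically and solving for $(x,y,z)$ as functions of $t$ yields \emph{constant} velocity. Thus every point fixed in $F''$ executes uniform rectilinear motion in $F$, i.e., $F''$ moves with uniform velocity relative to $F$. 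Finally, since $F''$ moves uniformly relative to the IF $F$, Lemma~\ref{lemma 2} promotes $F''$ to an IF, completing the argument.

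The main obstacle is conceptual rather than computational: one must be careful that the second invocation of homogeneity is legitimate, which is why establishing $F'$ as a bona fide IF \emph{before} introducing $F''$ is the crucial first step. A second subtlety is to argue that the composite motion is not merely rectilinear but genuinely \emph{uniform} (constant velocity); this is exactly what linearity of the composite buys us, since it preserves affine worldlines and sends a constant-coordinate locus to one traversed at constant $dx/dt$. Everything else---closure of linear bijections under composition and the final appeal to Lemma~\ref{lemma 2}---is routine.
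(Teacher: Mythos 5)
Your proof is essentially correct, but it takes a genuinely different route from the paper's. The paper disposes of the corollary in two sentences: $F''$ is inertial \emph{by definition} (it moves uniformly relative to the IF $F'$, so the relativity principle makes it an IF), and then, invoking what was established in Theorem~\ref{thm1} about any pair of IFs, it concludes that the two IFs $F''$ and $F$ must be mutually at rest or in uniform relative motion. You instead build the result constructively: you compose the two linear bijections $F\to F'$ and $F'\to F''$ (each licensed by homogeneity, the second after first certifying $F'$ as an IF via Lemma~\ref{lemma 2}), observe that the composite is again an invertible linear map, and use preservation of affine worldlines to show that any point fixed in $F''$ moves with constant velocity in $F$, after which Lemma~\ref{lemma 2} promotes $F''$ to an IF. Your version is more explicit and foreshadows the group-composition computation the paper performs later in Eqs.~\eqref{x3}--\eqref{t3}, whereas the paper's version is shorter but leans on the (somewhat informally stated) symmetry argument of Theorem~\ref{thm1}. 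One small soft spot in your argument: you claim that bijectivity alone forces the $t$-component to be non-constant along the image worldline. That does not follow --- a bijective linear map can perfectly well carry a particular line into a hyperplane of constant $t$ (the image is still a line, just one traversed ``instantaneously'' in $F$). Ruling this out requires an extra physical input, e.g.\ that the composite transformation has the general boost form already derived with some finite relative speed, or an appeal to the forward flow of time along worldlines; as stated, this step is asserted rather than proved.
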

\begin{proof}
By definition, $F''$ is an inertial frame, in proving the Theorem \ref{thm1}, we have proved already that $F$ and $F'$ moves with uniform velocity in opposite directions relative to $F'$ and $F$ respectively. Since $F''$ is also an IF therefore it is either at rest or is moving with uniform velocity with respect to $F$.  
\end{proof}
\begin{corollary}\label{cor2}
$\gamma (v)=\gamma (-v)$, $\phi(v)=1$, $c(v)=d(v)=0$, $a(v)=\gamma (v)$ and $b(v)=\frac{\gamma (v)}{v}\left(\frac{1}{\gamma (v)^2}-1\right)$.
\end{corollary}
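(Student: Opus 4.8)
The plan is to exploit the velocity reciprocity just established in Theorem~\ref{thm1} together with the requirement that boosting from $F$ to $F'$ and back must reproduce the identity map. By the principle of relativity, the transformation between any two frames in the standard configuration of Eqs.~\eqref{x1}--\eqref{t1} is given by the \emph{same} universal functions $\gamma,\phi,a,b,c,d$ of their relative velocity. Since Theorem~\ref{thm1} shows that $F$ moves with velocity $-v$ along the $X'$ axis of $F'$, the inverse map $F'\to F$ must read
\begin{eqnarray}
& x=\gamma(-v)(x'+vt'), \\
& y=\phi(-v)y', \\
& z=\phi(-v)z', \\
& t=a(-v)t'+b(-v)x'+c(-v)y'+d(-v)z'.
\end{eqnarray}
First I would insert Eqs.~\eqref{x1}--\eqref{t1} into these relations and demand that the composition be the identity on $(x,y,z,t)$, reading off one algebraic constraint from each coordinate.

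The transverse equations decouple: substituting $y'=\phi(v)y$ into $y=\phi(-v)y'$ yields $\phi(v)\phi(-v)=1$, and likewise for $z$. To close this I would invoke isotropy of space: a rotation by $\pi$ about the $Y$ axis carries the boost with velocity $v$ into one with velocity $-v$ while leaving the transverse direction $y$ invariant, so $\phi$ can depend only on $|v|$, i.e. $\phi(v)=\phi(-v)$. Combined with $\phi(v)\phi(-v)=1$ and the positivity $\phi(v)>0$ from Lemma~\ref{lemma 1}, this forces $\phi(v)=1$. The identical rotation argument applied to Eq.~\eqref{x1} gives $\gamma(v)=\gamma(-v)$, which is the decisive extra input below.

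Next I would collect the coefficients of $x,t,y,z$ in the composed $x$-equation obtained by inserting Eqs.~\eqref{x1} and \eqref{t1} into $x=\gamma(-v)(x'+vt')$. Matching the identity produces four relations: vanishing of the $y$ and $z$ coefficients forces $c(v)=d(v)=0$; vanishing of the $t$ coefficient gives $a(v)=\gamma(v)$; and the $x$ coefficient yields $\gamma(-v)\left(\gamma(v)+v\,b(v)\right)=1$. Using the evenness $\gamma(v)=\gamma(-v)$, this last relation solves immediately for $b(v)=\dfrac{1-\gamma(v)^2}{v\,\gamma(v)}=\dfrac{\gamma(v)}{v}\left(\dfrac{1}{\gamma(v)^2}-1\right)$, as claimed. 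Finally I would verify the composed $t$-equation as a consistency check; using $a=\gamma$, $c=d=0$ and $\gamma(v)=\gamma(-v)$ it reduces to the same constraint together with $b(-v)=-b(v)$, so it yields no new information.

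The main obstacle is the evenness $\gamma(v)=\gamma(-v)$ (equivalently $\phi(v)=\phi(-v)$): the composition constraints by themselves are underdetermined, since the $x$- and $t$-coefficient equations turn out not to be independent and cannot separate $\gamma(v)$ from $\gamma(-v)$. Genuine physical input beyond reciprocity is therefore required, and this is precisely what the isotropy of space supplies through the $\pi$-rotation argument. Once that is in hand, everything else is routine coefficient matching.
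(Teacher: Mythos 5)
Your proposal is correct and follows essentially the same route as the paper: both use Theorem~\ref{thm1} to write the inverse map with the same coefficient functions evaluated at $-v$, invoke isotropy to get $\gamma(v)=\gamma(-v)$ and $\phi(v)=\phi(-v)$ (the paper via direction-independence of a moving rod's length change, you via the slightly more formal $\pi$-rotation argument), and then extract $a$, $b$, $c$, $d$ by composing the $x$-equations. Your coefficient-matching on the composed $x$-equation is just a reorganization of the paper's step of substituting Eq.~\eqref{x1} into Eq.~\eqref{x1'} and solving for $t'$, so there is no substantive difference.
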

\begin{proof}
From Theorem \ref{thm1}, we have the inverse transformation relation of the Eq. \eqref{x1}-Eq. \eqref{t1}:
\begin{eqnarray}
& x=\gamma (-v)(x'+vt') \label{x1'}\\
& y=\phi (-v) y', \label{y1'}\\
& z=\phi (-v) z', \label{z1'}\\
& t=a(-v)t' +b(-v)x'+c(-v)y'+d(-v)z'. \label{t1'}
\end{eqnarray}

Moving rod's length change (contraction or elongation) from its rest value as viewed from the stationary IF, has to be independent of its direction of motion because of isotropy. Therefore $\gamma (v)=\gamma (-v)>0$, and $\phi (v)=\phi (-v)>0$.  Eq. \eqref{y1}, and Eq. \eqref{y1'} provide $\phi (v)\phi (-v)=1$; since $\phi(v)$ is an even function of $v$, therefore $\phi(v) ^2 =1$ $\Rightarrow \phi (v)=1\because \phi (v)>0$ by lemma \ref{lemma 1}. By using $\gamma (v)=\gamma (-v)$, and inserting the right hand side expression of $x'$ of Eq. \eqref{x1} into the Eq. \eqref{x1'}, we find the linear transformation relation of $t'$ in terms of $t$ and $x$. Hence we are finally left with a general form of rotation free LT as below:- 
\begin{eqnarray}
& x'=\gamma (v)(x-vt), \qquad x=\gamma (v)(x'+vt'); \label{x2}\\
& y'= y, \qquad y= y';\label{y2}\\
& z'= z, \qquad z= z';\label{z2}\\
& t'=\gamma (v)\left(t+\frac{x}{v}\left(\frac{1}{\gamma (v)^2}-1\right)\right), \qquad t=\gamma (v)\left(t'-\frac{x'}{v}\left(\frac{1}{\gamma (v)^2}-1\right)\right). \label{t2}
\end{eqnarray}
\end{proof}
We prove $c(v)=d(v)=0$ by using the velocity reciprocity principle, i.e., by using Theorem \ref{thm1}. However, one can prove it in a different way, i.e., by using a symmetry argument that $t'$ in the Eq. \eqref{t1'} should be independent of $y$ and $z$ because any position at $F$ coordinate perpendicular to motion of $F$ with respect to $F'$ must have equal status. This is what stated in the Ref. \cite{schroder1990special, bergmann1976introduction}. Some of the authors have preferred to work in the $1+1$D spacetime \cite{sen,peli,gorini,lee,doi:10.1119/10.0009219, doi:10.1119/10.0010234}. In this regard, we suggest the readers for a further read on Gorini and their collaborators' work which conclude that it is only Galilean transformation and Lorentz transformation which are compatible with the isotropy property of space in any spacetime dimension \cite{1970JMP....11.2226G, 1971CMaPh..21..150G}.

$\gamma (v)=1$ is certainly Galilean transformation.
We already know that $\gamma(v)$ is an even function of $v$, we can Taylor expand it in a general form around $v=0$, provided $\gamma (v=0)=1$:-
\begin{equation}\label{tform}
\gamma (v)=1+B_2 \frac{v^2}{u^2} + B_4 \frac{v^4}{u^4} + B_6\frac{v^6}{u^6}+..=\sum _{n=0}^{n=\infty} B_{2n}\frac{v^{2n}}{u^{2n}}.
\end{equation}
$B_{2n}$s are dimensionless quantities, $B_0=1$, and constant $u$ has dimension of speed. From such a Taylor expansion, one might guess the existence of a universal speed scale $u=c$, which remains same relative to all inertial frames. Therefore propagation of such a signal satisfies: $c^2dt^2-dx^2-dy^2-dz^2=c^2dt'^2-dx'^2-dy'^2-dz'^2=0$; by using Eq. \eqref{x2} to Eq. \eqref{t2}, one obtains from the first equality, $\gamma (v)=\frac{1}{\sqrt{1-\frac{v^2}{c^2}}}$. However, the Taylor expansion \eqref{tform} doesn't guarantee that such a speed would exist, $u$ may as well be a constant number having dimension of speed with suitable $B_{2n}$s. At this level, it is purely intuitive. 

\noindent
{\it Velocity addition:-} Let us consider an IF moving with uniform velocity $v'$ along positive $X'$ axis, relative to $F'$. The velocity of $F''$ with respect to $F$ can be found from the Eq. \eqref{x2}-\eqref{t2}, it is  $v''=\frac{v+v'}{1-\frac{v'}{v}\left(\frac{1}{\gamma (v)^2} -1\right)}$ along the positive $X$ axis of $F$. Evidently, finding a velocity of the IF $F''$ with respect to the IF $F$ is consistent with the corollary \ref{cor1}.
\begin{corollary}
$\gamma (v)=\frac{1}{\sqrt{1+k v^2}}$, where constant $k\in \mathbb{R}$.
\end{corollary}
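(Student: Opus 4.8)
The plan is to combine the explicit boost form in Eqs.~\eqref{x2}--\eqref{t2} with the closure of boosts established in Corollary~\ref{cor1}. First I would package the boost as a single $2\times 2$ matrix acting on the column $(x,t)$. Introducing the shorthand $\alpha(v)=\frac{1}{v}\left(\frac{1}{\gamma(v)^2}-1\right)$, Eqs.~\eqref{x2} and \eqref{t2} become
\begin{equation}
\begin{pmatrix} x' \\ t' \end{pmatrix}=M(v)\begin{pmatrix} x \\ t \end{pmatrix},\qquad M(v)=\begin{pmatrix} \gamma(v) & -v\,\gamma(v) \\ \gamma(v)\,\alpha(v) & \gamma(v) \end{pmatrix}.
\end{equation}
Since $\gamma$ is even, $\alpha$ is odd, and one checks $\det M(v)=\gamma(v)^2\bigl(1+v\,\alpha(v)\bigr)=1$, so that $\gamma(v)^2=\bigl(1+v\,\alpha(v)\bigr)^{-1}$. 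The target identity $\gamma(v)=\bigl(1+kv^2\bigr)^{-1/2}$ is thus \emph{equivalent} to showing that $\alpha(v)=k\,v$, i.e.\ that $\alpha(v)/v$ is a single constant independent of $v$.

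The key step uses the group property. By Corollary~\ref{cor1}, if $F'$ moves with velocity $v$ relative to $F$ and $F''$ with velocity $v'$ relative to $F'$, then $F''$ is an IF moving uniformly relative to $F$; hence the composite map is itself a boost and must carry the form \eqref{x2}--\eqref{t2}, i.e.\ $M(v'')=M(v')M(v)$ where $M(v'')$ has the same structure --- crucially, its two diagonal entries are \emph{equal}. Computing the product gives
\begin{equation}
M(v')M(v)=\gamma(v')\gamma(v)\begin{pmatrix} 1-v'\alpha(v) & -(v+v') \\ \alpha(v)+\alpha(v') & 1-v\,\alpha(v') \end{pmatrix},
\end{equation}
and equating the $(1,1)$ and $(2,2)$ entries forces $v'\alpha(v)=v\,\alpha(v')$, that is $\alpha(v)/v=\alpha(v')/v'$.

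Since $v$ and $v'$ are independent parameters (the velocity of $F'$ over $F$ and of $F''$ over $F'$), the relation $\alpha(v)/v=\alpha(v')/v'$ holds for every admissible pair, which is possible only if $\alpha(v)/v$ equals a fixed real number $k$. Substituting $\alpha(v)=k\,v$ into $\gamma(v)^2=\bigl(1+v\,\alpha(v)\bigr)^{-1}$ yields $\gamma(v)=\bigl(1+kv^2\bigr)^{-1/2}$, the sign of the root being fixed by $\gamma(v)>0$ from Lemma~\ref{lemma 1}. I expect the main obstacle to be the justification that the composed transformation is again a boost of \emph{exactly} the form \eqref{x2}--\eqref{t2} with equal diagonal entries; this is where Corollary~\ref{cor1} (closure) and the uniqueness of the boost form from Corollary~\ref{cor2} are indispensable. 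As a consistency check, the off-diagonal entries reproduce the velocity-addition law already recorded above, $v''=(v+v')/\bigl(1-v'\alpha(v)\bigr)$, so that no constraint beyond $\alpha(v)=k\,v$ is imposed.
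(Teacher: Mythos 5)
Your proof is correct and follows essentially the same route as the paper: both exploit closure under composition (Corollary~\ref{cor1}) to demand that the coefficient of $x$ in $x''$ and the coefficient of $t$ in $t''$ both equal $\gamma(v'')$, which yields exactly the paper's functional equation $v'\alpha(v)=v\alpha(v')$ and hence $\bigl(1/\gamma(v)^2-1\bigr)/v^2=k$. Your matrix packaging and the unit-determinant observation are a tidy reformulation rather than a different argument.
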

\begin{proof}
The coordinate transformation equations between $F''$ and $F'$ take the form of Eq. \eqref{x2}-Eq. \eqref{t2} with $v$ replaced by $v'$. Therefore, using the Eq. \eqref{x2}-\eqref{t2}, we have
\begin{eqnarray}
& x''=\gamma (v')\gamma (v)\left[1-\frac{v'}{v}\left(\frac{1}{\gamma (v) ^2}-1\right)\right]x -\gamma (v)\gamma (v')(v+v')t.  \label{x3} \\
& y''=y'=y, \label{y3}\\
& z''=z'=z, \label{z3}\\
& t''= \gamma (v')\gamma (v)\left[1-\frac{v}{v'}\left(\frac{1}{\gamma (v') ^2}-1\right)\right]t+\gamma (v')\gamma (v)\left[\frac{1}{v}\left\{\frac{1}{\gamma (v) ^2}-1\right\}+\frac{1}{v'}\left\{\frac{1}{\gamma (v') ^2}-1\right\}\right]x
\label{t3}
\end{eqnarray}
The coordinate transformation equations between $F''$ and $F$ take the form of Eq. \eqref{x2}-Eq. \eqref{t2} with $v$ replaced by $v''$. Therefore, in the RHS of the first term in the Eq. \eqref{x3}, and the Eq. \eqref{t3}, we have 
\begin{equation}
 \gamma (v'') =\gamma (v')\gamma (v)\left[1-\frac{v'}{v}\left(\frac{1}{\gamma (v) ^2}-1\right)\right]=  \gamma (v')\gamma (v)\left[1-\frac{v}{v'}\left(\frac{1}{\gamma (v') ^2}-1\right)\right].
\end{equation}
Comparing the second equality, we find
\begin{equation}\label{hehe}
\frac{\frac{1}{\gamma (v')^2}-1}{\frac{1}{\gamma (v)^2}-1}=\frac{v'^{2}}{v^2}.
\end{equation}
$\gamma (v)$ is an even function of the form \eqref{tform}, hence $\frac{1}{\gamma (v)^2}$ is also an even function. 
Therefore, according to the Eq. \eqref{hehe}, we have
\begin{equation}
\gamma (v)^2=\frac{1}{1+k v^2}.   
\end{equation}
Since by lemma \ref{lemma 1}, $\gamma (v) >0$, therefore
\begin{equation}
\gamma (v)=\frac{1}{\sqrt{1+k v^2}}.   
\end{equation}
\end{proof}
This is how the group structure of the transformation leaves us with a specific form of the transformation \cite{gorini,peli}.
\begin{corollary}
$k=0$ is Galilean transformation, $k<0$ is Lorentz transformation, $k>0$ is not possible.
\end{corollary}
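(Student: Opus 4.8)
The plan is to treat the three sign cases of $k$ separately, since the form $\gamma(v)=1/\sqrt{1+kv^2}$ already fixes the entire transformation through Eq.~\eqref{x2}--\eqref{t2}. The cases $k=0$ and $k<0$ are settled by direct substitution, and the whole weight of the statement rests on excluding $k>0$ on physical grounds.

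For $k=0$ I would note that $\gamma(v)=1$ identically, so $1/\gamma(v)^2-1=0$; inserting this into Eq.~\eqref{x2}--\eqref{t2} collapses the time equation to $t'=t$ and leaves $x'=x-vt$, $y'=y$, $z'=z$, which is precisely the Galilean transformation. For $k<0$ I would set $k=-1/c^2$ with $c>0$ a constant of dimension speed, giving $\gamma(v)=1/\sqrt{1-v^2/c^2}$ and $1/\gamma(v)^2-1=-v^2/c^2$. Substituting into Eq.~\eqref{t2} yields $t'=\gamma(v)\,(t-vx/c^2)$, so that Eq.~\eqref{x2}--\eqref{t2} become the standard Lorentz boost with invariant speed $c$; reality and positivity of $\gamma$ then restrict admissible boosts to $|v|<c$, and a one-line check shows $c^2t'^2-x'^2=c^2t^2-x^2$, identifying $c$ as the frame-independent speed anticipated after Eq.~\eqref{tform}.

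The substantive case is $k>0$. Writing $k=1/u^2$ and using $1/\gamma(v)^2-1=kv^2$ in the velocity-addition formula $v''=(v+v')/\bigl(1-\tfrac{v'}{v}(1/\gamma(v)^2-1)\bigr)$ gives the composition law $v''=(v+v')/(1-kvv')$. I would then exhibit the pathology directly: for two positive boost speeds with $vv'=1/k$ the denominator vanishes, so two frames each moving with finite speed relative to the next are assigned infinite relative speed, while for $vv'>1/k$ one finds $v''<0$, i.e.\ composing two boosts along the positive $X$ direction produces net motion along the negative direction. Either outcome contradicts the requirement, established in Corollary~\ref{cor1}, that the composite of two boosts be an ordinary boost to a genuine inertial frame; hence $k>0$ is excluded.

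To make the exclusion structurally transparent I would also give the equivalent geometric picture: with $\tau=ut$ and $\tan\theta=v/u$ one has $\gamma=\cos\theta$ and $\gamma v/u=\sin\theta$, so Eq.~\eqref{x2}--\eqref{t2} become a Euclidean rotation of $(x,\tau)$ through angle $\theta$, preserving $x^2+\tau^2$. Because such rotations form a compact one-parameter group, a suitable composition of boosts realizes $\theta=\pi$, which sends $\tau\to-\tau$ and reverses the time ordering of events, while $\theta=2\pi$ returns to the identity so that a sufficiently fast frame becomes indistinguishable from rest. Both conclusions are physically inadmissible. The main obstacle is precisely this step: unlike $k\le 0$, ruling out $k>0$ is not a computational contradiction but a causal/ordering one, so the argument's force depends on stating clearly which physical principle---finiteness and sign-preservation of relative velocity, or irreversibility of temporal order---is being violated.
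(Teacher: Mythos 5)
Your proposal is correct, and for the cases $k=0$ and $k<0$ it matches the paper's proof (direct identification of the Galilean and Lorentz forms, with $|v|<c$ forced by reality of $\gamma$). Where you genuinely diverge is the crucial exclusion of $k>0$. The paper argues via causality of particle motion: since $k>0$ gives no upper speed bound, one may take a particle with speed $u>c^2/v$ in $F'$, and the inverse of Eq.~\eqref{t2} then makes its time coordinate in $F$ run backwards, i.e.\ time travel into the past; this is rejected as unphysical. You instead exploit the group structure of the boosts themselves: the composition law $v''=(v+v')/(1-kvv')$ diverges at $vv'=1/k$ and changes sign beyond it, and in the rotation picture ($\gamma=\cos\theta$, $\tan\theta=v/u$) the boosts generate a compact rotation group in the $(x,ut)$ plane, so finite compositions reach time reversal ($\theta=\pi$) or the identity ($\theta=2\pi$). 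Each route has something to recommend it. Your argument is internal to the structure already established: it needs only the boosts guaranteed by Corollary~\ref{cor1} and derives a contradiction with Theorem~\ref{thm1}'s requirement of a finite, well-defined reciprocal velocity, without positing the existence of arbitrarily fast particles. The paper's argument is shorter but imports the extra physical premise that signals of unbounded speed are available and that reversal of temporal order for such a signal is inadmissible. In both cases the exclusion of $k>0$ is ultimately a physical (causal/ordering) postulate rather than a mathematical consequence of the earlier lemmas, and you are right to flag explicitly which principle is being invoked --- the paper is less careful on this point. One small caution: the sign reversal $v''<0$ for $vv'>1/k$ is by itself only a strangeness, not a contradiction, unless you add a continuity or ordering axiom; the divergence at $vv'=1/k$ (where the composite map swaps the roles of $x$ and $t$ and so is not a boost of the form Eq.~\eqref{x2}--\eqref{t2} for any finite $v''$) is the airtight part of your argument and should carry the weight.
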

\begin{proof}
$k=0$ is the trivial case, Galilean transformation, i.e., $\gamma (v)=1$.

\noindent 
$\gamma (v)$ is a dimensionless quantity, $k$ has dimension of inverse of velocity square, i.e., $k=\pm \frac{1}{c^2}$, $c$ is a real constant number having dimension of speed.

\noindent
$k<0$, i.e., $\gamma (v)\geq 1$ implies Lorentz transformation.$\gamma(v)=\frac{1}{\sqrt{1-\frac{v^2}{c^2}}}$. From the previous discussion, we note $c$ is a speed which is invariant under Lorentz transformation. Since $\gamma (v)$ has to be a real number for a physical coordinate transformation, $c$ is the maximum allowable speed. There exists only a single speed $c$ which is Lorentz invariant.

\noindent
When we have $k=\frac{1}{c^2}$, i.e., $\gamma (v)\leq 1$, there is no maximum speed bound. From the relation \eqref{t3}, we have $t=\frac{1}{\sqrt{1+\frac{v^2}{c^2}}}\left( t'-\frac{v x'}{c^2}\right)$. Let us imagine a particle moving in $F'$ along the positive $X'$ axis with speed $u$. Therefore, for $u>\frac{c^2}{v}$, the particle time travels in the past in the reference frame $F$. If we allow time travel in the past like what happens in science fictions, $k>0$ is possible. In reality, $k>0$ would allow possibilities which simply violate causality. For $k<0$, it is straight forward to find that due to the maximum speed bound, time travel in the past isn't possible.

\end{proof}
Nevertheless, we have hypothetical particle which moves faster than light violating causality \cite{1962AmJPh..30..718B,PhysRev.162.1274, LIBERATI2002167}.
\section{Conclusion}
Therefore, principle of relativity along with homogeneity of space and time and isotropy of space naturally gives rise to two possibilities, either we have Galilean transformation where space and time is absolute, or we have Lorentz Transformation with a boost invariant speed $c$ as the upper limit of all speeds. Nature happens to choose the second possibility, the more interesting one in our opinion, and $c$ happens to be the speed of Electromagnetic wave in vacuum. 
We end our discussion with a quote by Einstein ``The most incomprehensible thing about the world is that it is comprehensible."
\section{Acknowledgments}
The author thanks Dr. Sang-Shin Baak for useful discussion.
This work has been supported by the National Research Foundation of Korea under Grants No.~2017R1A2A2A05001422 and No.~2020R1A2C2008103.

\bibliography{arl}

\begin{thebibliography}{30}%
\makeatletter
\providecommand \@ifxundefined [1]{%
 \@ifx{#1\undefined}
}%
\providecommand \@ifnum [1]{%
 \ifnum #1\expandafter \@firstoftwo
 \else \expandafter \@secondoftwo
 \fi
}%
\providecommand \@ifx [1]{%
 \ifx #1\expandafter \@firstoftwo
 \else \expandafter \@secondoftwo
 \fi
}%
\providecommand \natexlab [1]{#1}%
\providecommand \enquote  [1]{``#1''}%
\providecommand \bibnamefont  [1]{#1}%
\providecommand \bibfnamefont [1]{#1}%
\providecommand \citenamefont [1]{#1}%
\providecommand \href@noop [0]{\@secondoftwo}%
\providecommand \href [0]{\begingroup \@sanitize@url \@href}%
\providecommand \@href[1]{\@@startlink{#1}\@@href}%
\providecommand \@@href[1]{\endgroup#1\@@endlink}%
\providecommand \@sanitize@url [0]{\catcode `\\12\catcode `\$12\catcode
  `\&12\catcode `\#12\catcode `\^12\catcode `\_12\catcode `\%12\relax}%
\providecommand \@@startlink[1]{}%
\providecommand \@@endlink[0]{}%
\providecommand \url  [0]{\begingroup\@sanitize@url \@url }%
\providecommand \@url [1]{\endgroup\@href {#1}{\urlprefix }}%
\providecommand \urlprefix  [0]{URL }%
\providecommand \Eprint [0]{\href }%
\providecommand \doibase [0]{https://doi.org/}%
\providecommand \selectlanguage [0]{\@gobble}%
\providecommand \bibinfo  [0]{\@secondoftwo}%
\providecommand \bibfield  [0]{\@secondoftwo}%
\providecommand \translation [1]{[#1]}%
\providecommand \BibitemOpen [0]{}%
\providecommand \bibitemStop [0]{}%
\providecommand \bibitemNoStop [0]{.\EOS\space}%
\providecommand \EOS [0]{\spacefactor3000\relax}%
\providecommand \BibitemShut  [1]{\csname bibitem#1\endcsname}%
\let\auto@bib@innerbib\@empty
\bibitem [{\citenamefont {Von~Ignatowsky}(1911)}]{von1911relativitatsprinzip}%
  \BibitemOpen
  \bibfield  {author} {\bibinfo {author} {\bibfnamefont {W.}~\bibnamefont
  {Von~Ignatowsky}},\ }\bibfield  {title} {\bibinfo {title} {Das
  relativit{\"a}tsprinzip},\ }\href@noop {} {\bibfield  {journal} {\bibinfo
  {journal} {Arch. Math. Phys}\ }\textbf {\bibinfo {volume} {3}},\ \bibinfo
  {pages} {1} (\bibinfo {year} {1911})}\BibitemShut {NoStop}%
\bibitem [{\citenamefont {Einstein}(1905)}]{einstein}%
  \BibitemOpen
  \bibfield  {author} {\bibinfo {author} {\bibfnamefont {A.}~\bibnamefont
  {Einstein}},\ }\bibfield  {title} {\bibinfo {title} {Zur elektrodynamik
  bewegter körper},\ }\href
  {https://doi.org/https://doi.org/10.1002/andp.19053221004} {\bibfield
  {journal} {\bibinfo  {journal} {Annalen der Physik}\ }\textbf {\bibinfo
  {volume} {322}},\ \bibinfo {pages} {891} (\bibinfo {year} {1905})},\ \Eprint
  {https://arxiv.org/abs/https://onlinelibrary.wiley.com/doi/pdf/10.1002/andp.19053221004}
  {https://onlinelibrary.wiley.com/doi/pdf/10.1002/andp.19053221004}
  \BibitemShut {NoStop}%
\bibitem [{\citenamefont {Saha}\ and\ \citenamefont
  {Bose}(1920)}]{saha1920principle}%
  \BibitemOpen
  \bibfield  {author} {\bibinfo {author} {\bibfnamefont {M.}~\bibnamefont
  {Saha}}\ and\ \bibinfo {author} {\bibfnamefont {S.}~\bibnamefont {Bose}},\
  }\href@noop {} {\bibinfo {title} {The principle of relativity: Original
  papers by {A}. {E}instein and {H}. {M}inkowski}} (\bibinfo {year}
  {1920})\BibitemShut {NoStop}%
\bibitem [{\citenamefont {Resnick}\ and\ \citenamefont
  {Halliday}(1992)}]{resnick1992basic}%
  \BibitemOpen
  \bibfield  {author} {\bibinfo {author} {\bibfnamefont {R.}~\bibnamefont
  {Resnick}}\ and\ \bibinfo {author} {\bibfnamefont {D.}~\bibnamefont
  {Halliday}},\ }\bibfield  {title} {\bibinfo {title} {Basic concepts in
  relativity and early quantum theory},\ }\href@noop {} {\bibfield  {journal}
  {\bibinfo  {journal} {Basic concepts in relativity and early quantum
  theory/Robert Resnick}\ } (\bibinfo {year} {1992})}\BibitemShut {NoStop}%
\bibitem [{\citenamefont {Bohm}(2015)}]{bohm2015special}%
  \BibitemOpen
  \bibfield  {author} {\bibinfo {author} {\bibfnamefont {D.}~\bibnamefont
  {Bohm}},\ }\href@noop {} {\emph {\bibinfo {title} {The special theory of
  relativity}}}\ (\bibinfo  {publisher} {Routledge},\ \bibinfo {year}
  {2015})\BibitemShut {NoStop}%
\bibitem [{\citenamefont {Griffiths}(2005)}]{griffiths2005introduction}%
  \BibitemOpen
  \bibfield  {author} {\bibinfo {author} {\bibfnamefont {D.~J.}\ \bibnamefont
  {Griffiths}},\ }\href@noop {} {\bibinfo {title} {Introduction to
  electrodynamics}} (\bibinfo {year} {2005})\BibitemShut {NoStop}%
\bibitem [{\citenamefont {Rosser}(2017)}]{rosser2017introductory}%
  \BibitemOpen
  \bibfield  {author} {\bibinfo {author} {\bibfnamefont {W.~G.~V.}\
  \bibnamefont {Rosser}},\ }\href@noop {} {\emph {\bibinfo {title}
  {Introductory special relativity}}}\ (\bibinfo  {publisher} {CRC Press},\
  \bibinfo {year} {2017})\BibitemShut {NoStop}%
\bibitem [{\citenamefont {Rindler}(1991)}]{rindler1991introduction}%
  \BibitemOpen
  \bibfield  {author} {\bibinfo {author} {\bibfnamefont {W.}~\bibnamefont
  {Rindler}},\ }\bibfield  {title} {\bibinfo {title} {Introduction to special
  relativity. 2},\ }\href@noop {} {\  (\bibinfo {year} {1991})}\BibitemShut
  {NoStop}%
\bibitem [{\citenamefont {Berzi}\ and\ \citenamefont {Gorini}(1969)}]{gorini}%
  \BibitemOpen
  \bibfield  {author} {\bibinfo {author} {\bibfnamefont {V.}~\bibnamefont
  {Berzi}}\ and\ \bibinfo {author} {\bibfnamefont {V.}~\bibnamefont {Gorini}},\
  }\bibfield  {title} {\bibinfo {title} {Reciprocity principle and the lorentz
  transformations},\ }\href {https://doi.org/10.1063/1.1665000} {\bibfield
  {journal} {\bibinfo  {journal} {Journal of Mathematical Physics}\ }\textbf
  {\bibinfo {volume} {10}},\ \bibinfo {pages} {1518} (\bibinfo {year}
  {1969})},\ \Eprint {https://arxiv.org/abs/https://doi.org/10.1063/1.1665000}
  {https://doi.org/10.1063/1.1665000} \BibitemShut {NoStop}%
\bibitem [{\citenamefont {Lee}\ and\ \citenamefont {Kalotas}(1975)}]{lee}%
  \BibitemOpen
  \bibfield  {author} {\bibinfo {author} {\bibfnamefont {A.~R.}\ \bibnamefont
  {Lee}}\ and\ \bibinfo {author} {\bibfnamefont {T.~M.}\ \bibnamefont
  {Kalotas}},\ }\bibfield  {title} {\bibinfo {title} {Lorentz transformations
  from the first postulate},\ }\href {https://doi.org/10.1119/1.9807}
  {\bibfield  {journal} {\bibinfo  {journal} {American Journal of Physics}\
  }\textbf {\bibinfo {volume} {43}},\ \bibinfo {pages} {434} (\bibinfo {year}
  {1975})},\ \Eprint {https://arxiv.org/abs/https://doi.org/10.1119/1.9807}
  {https://doi.org/10.1119/1.9807} \BibitemShut {NoStop}%
\bibitem [{\citenamefont {Sen}(1994)}]{sen}%
  \BibitemOpen
  \bibfield  {author} {\bibinfo {author} {\bibfnamefont {A.}~\bibnamefont
  {Sen}},\ }\bibfield  {title} {\bibinfo {title} {How galileo could have
  derived the special theory of relativity},\ }\href
  {https://doi.org/10.1119/1.17636} {\bibfield  {journal} {\bibinfo  {journal}
  {American Journal of Physics}\ }\textbf {\bibinfo {volume} {62}},\ \bibinfo
  {pages} {157} (\bibinfo {year} {1994})},\ \Eprint
  {https://arxiv.org/abs/https://doi.org/10.1119/1.17636}
  {https://doi.org/10.1119/1.17636} \BibitemShut {NoStop}%
\bibitem [{\citenamefont {Pelissetto}\ and\ \citenamefont
  {Testa}(2015)}]{peli}%
  \BibitemOpen
  \bibfield  {author} {\bibinfo {author} {\bibfnamefont {A.}~\bibnamefont
  {Pelissetto}}\ and\ \bibinfo {author} {\bibfnamefont {M.}~\bibnamefont
  {Testa}},\ }\bibfield  {title} {\bibinfo {title} {Getting the lorentz
  transformations without requiring an invariant speed},\ }\href
  {https://doi.org/10.1119/1.4901453} {\bibfield  {journal} {\bibinfo
  {journal} {American Journal of Physics}\ }\textbf {\bibinfo {volume} {83}},\
  \bibinfo {pages} {338} (\bibinfo {year} {2015})},\ \Eprint
  {https://arxiv.org/abs/https://doi.org/10.1119/1.4901453}
  {https://doi.org/10.1119/1.4901453} \BibitemShut {NoStop}%
\bibitem [{\citenamefont {Galiautdinov}(2017)}]{ga}%
  \BibitemOpen
  \bibfield  {author} {\bibinfo {author} {\bibfnamefont {A.}~\bibnamefont
  {Galiautdinov}},\ }\href {https://doi.org/10.48550/ARXIV.1701.00270}
  {\bibinfo {title} {Derivation of the lorentz transformation without the use
  of einstein's second postulate}} (\bibinfo {year} {2017})\BibitemShut
  {NoStop}%
\bibitem [{\citenamefont {Coleman}(2003)}]{BC}%
  \BibitemOpen
  \bibfield  {author} {\bibinfo {author} {\bibfnamefont {B.}~\bibnamefont
  {Coleman}},\ }\bibfield  {title} {\bibinfo {title} {A dual first-postulate
  basis for special relativity},\ }\href
  {https://doi.org/10.1088/0143-0807/24/3/311} {\bibfield  {journal} {\bibinfo
  {journal} {European Journal of Physics}\ }\textbf {\bibinfo {volume} {24}},\
  \bibinfo {pages} {301} (\bibinfo {year} {2003})}\BibitemShut {NoStop}%
\bibitem [{\citenamefont {Gao}(2017)}]{pittphilsci13220}%
  \BibitemOpen
  \bibfield  {author} {\bibinfo {author} {\bibfnamefont {S.}~\bibnamefont
  {Gao}},\ }\href {http://philsci-archive.pitt.edu/13220/} {\bibinfo {title}
  {Relativity without light: A further suggestion}} (\bibinfo {year}
  {2017})\BibitemShut {NoStop}%
\bibitem [{\citenamefont {Torretti}(1996)}]{torretti1996relativity}%
  \BibitemOpen
  \bibfield  {author} {\bibinfo {author} {\bibfnamefont {R.}~\bibnamefont
  {Torretti}},\ }\href@noop {} {\emph {\bibinfo {title} {Relativity and
  geometry}}}\ (\bibinfo  {publisher} {Courier Corporation},\ \bibinfo {year}
  {1996})\BibitemShut {NoStop}%
\bibitem [{\citenamefont {Newton}(1833)}]{newton1833philosophiae}%
  \BibitemOpen
  \bibfield  {author} {\bibinfo {author} {\bibfnamefont {I.}~\bibnamefont
  {Newton}},\ }\href@noop {} {\emph {\bibinfo {title} {Philosophiae naturalis
  principia mathematica}}},\ Vol.~\bibinfo {volume} {1}\ (\bibinfo  {publisher}
  {G. Brookman},\ \bibinfo {year} {1833})\BibitemShut {NoStop}%
\bibitem [{\citenamefont {Fock}(2015)}]{fock}%
  \BibitemOpen
  \bibfield  {author} {\bibinfo {author} {\bibfnamefont {V.}~\bibnamefont
  {Fock}},\ }\href@noop {} {\emph {\bibinfo {title} {The theory of space, time
  and gravitation}}}\ (\bibinfo  {publisher} {Elsevier},\ \bibinfo {year}
  {2015})\BibitemShut {NoStop}%
\bibitem [{\citenamefont {{Baccetti}}\ \emph {et~al.}(2012)\citenamefont
  {{Baccetti}}, \citenamefont {{Tate}},\ and\ \citenamefont {{Visser}}}]{VV}%
  \BibitemOpen
  \bibfield  {author} {\bibinfo {author} {\bibfnamefont {V.}~\bibnamefont
  {{Baccetti}}}, \bibinfo {author} {\bibfnamefont {K.}~\bibnamefont {{Tate}}},\
  and\ \bibinfo {author} {\bibfnamefont {M.}~\bibnamefont {{Visser}}},\
  }\bibfield  {title} {\bibinfo {title} {{Inertial frames without the
  relativity principle}},\ }\href {https://doi.org/10.1007/JHEP05(2012)119}
  {\bibfield  {journal} {\bibinfo  {journal} {Journal of High Energy Physics}\
  }\textbf {\bibinfo {volume} {2012}},\ \bibinfo {eid} {119} (\bibinfo {year}
  {2012})},\ \Eprint {https://arxiv.org/abs/1112.1466} {arXiv:1112.1466
  [gr-qc]} \BibitemShut {NoStop}%
\bibitem [{\citenamefont {Rindler}(1977)}]{rindler}%
  \BibitemOpen
  \bibfield  {author} {\bibinfo {author} {\bibfnamefont {W.}~\bibnamefont
  {Rindler}},\ }\href@noop {} {\emph {\bibinfo {title} {Essential relativity:
  special, general, and cosmological}}}\ (\bibinfo  {publisher} {Springer
  Science \& Business Media},\ \bibinfo {year} {1977})\BibitemShut {NoStop}%
\bibitem [{\citenamefont {M{\o}ller}(1972)}]{moller1972theory}%
  \BibitemOpen
  \bibfield  {author} {\bibinfo {author} {\bibfnamefont {C.}~\bibnamefont
  {M{\o}ller}},\ }\bibfield  {title} {\bibinfo {title} {The theory of
  relativity},\ }\href@noop {} {\  (\bibinfo {year} {1972})}\BibitemShut
  {NoStop}%
\bibitem [{\citenamefont {Verheest}(2022)}]{doi:10.1119/10.0010234}%
  \BibitemOpen
  \bibfield  {author} {\bibinfo {author} {\bibfnamefont {F.}~\bibnamefont
  {Verheest}},\ }\bibfield  {title} {\bibinfo {title} {On the linearity of the
  generalized lorentz transformation},\ }\href
  {https://doi.org/10.1119/10.0010234} {\bibfield  {journal} {\bibinfo
  {journal} {American Journal of Physics}\ }\textbf {\bibinfo {volume} {90}},\
  \bibinfo {pages} {425} (\bibinfo {year} {2022})},\ \Eprint
  {https://arxiv.org/abs/https://doi.org/10.1119/10.0010234}
  {https://doi.org/10.1119/10.0010234} \BibitemShut {NoStop}%
\bibitem [{\citenamefont {{Gorini}}\ and\ \citenamefont
  {{Zecca}}(1970)}]{1970JMP....11.2226G}%
  \BibitemOpen
  \bibfield  {author} {\bibinfo {author} {\bibfnamefont {V.}~\bibnamefont
  {{Gorini}}}\ and\ \bibinfo {author} {\bibfnamefont {A.}~\bibnamefont
  {{Zecca}}},\ }\bibfield  {title} {\bibinfo {title} {{Isotropy of Space}},\
  }\href {https://doi.org/10.1063/1.1665385} {\bibfield  {journal} {\bibinfo
  {journal} {Journal of Mathematical Physics}\ }\textbf {\bibinfo {volume}
  {11}},\ \bibinfo {pages} {2226} (\bibinfo {year} {1970})}\BibitemShut
  {NoStop}%
\bibitem [{\citenamefont {{Gorini}}(1971)}]{1971CMaPh..21..150G}%
  \BibitemOpen
  \bibfield  {author} {\bibinfo {author} {\bibfnamefont {V.}~\bibnamefont
  {{Gorini}}},\ }\bibfield  {title} {\bibinfo {title} {{Linear kinematical
  groups}},\ }\href {https://doi.org/10.1007/BF01646749} {\bibfield  {journal}
  {\bibinfo  {journal} {Communications in Mathematical Physics}\ }\textbf
  {\bibinfo {volume} {21}},\ \bibinfo {pages} {150} (\bibinfo {year}
  {1971})}\BibitemShut {NoStop}%
\bibitem [{\citenamefont {Moylan}(2022)}]{doi:10.1119/10.0009219}%
  \BibitemOpen
  \bibfield  {author} {\bibinfo {author} {\bibfnamefont {P.}~\bibnamefont
  {Moylan}},\ }\bibfield  {title} {\bibinfo {title} {Velocity reciprocity and
  the relativity principle},\ }\href {https://doi.org/10.1119/10.0009219}
  {\bibfield  {journal} {\bibinfo  {journal} {American Journal of Physics}\
  }\textbf {\bibinfo {volume} {90}},\ \bibinfo {pages} {126} (\bibinfo {year}
  {2022})},\ \Eprint {https://arxiv.org/abs/https://doi.org/10.1119/10.0009219}
  {https://doi.org/10.1119/10.0009219} \BibitemShut {NoStop}%
\bibitem [{\citenamefont {Schroder}(1990)}]{schroder1990special}%
  \BibitemOpen
  \bibfield  {author} {\bibinfo {author} {\bibfnamefont {U.~E.}\ \bibnamefont
  {Schroder}},\ }\href@noop {} {\emph {\bibinfo {title} {Special
  relativity}}},\ Vol.~\bibinfo {volume} {33}\ (\bibinfo  {publisher} {World
  Scientific},\ \bibinfo {year} {1990})\BibitemShut {NoStop}%
\bibitem [{\citenamefont {Bergmann}(1976)}]{bergmann1976introduction}%
  \BibitemOpen
  \bibfield  {author} {\bibinfo {author} {\bibfnamefont {P.~G.}\ \bibnamefont
  {Bergmann}},\ }\href@noop {} {\emph {\bibinfo {title} {Introduction to the
  Theory of Relativity}}}\ (\bibinfo  {publisher} {Courier Corporation},\
  \bibinfo {year} {1976})\BibitemShut {NoStop}%
\bibitem [{\citenamefont {{Bilaniuk}}\ \emph {et~al.}(1962)\citenamefont
  {{Bilaniuk}}, \citenamefont {{Deshpande}},\ and\ \citenamefont
  {{Sudarshan}}}]{1962AmJPh..30..718B}%
  \BibitemOpen
  \bibfield  {author} {\bibinfo {author} {\bibfnamefont {O.~M.~P.}\
  \bibnamefont {{Bilaniuk}}}, \bibinfo {author} {\bibfnamefont {V.~K.}\
  \bibnamefont {{Deshpande}}},\ and\ \bibinfo {author} {\bibfnamefont
  {E.~C.~G.}\ \bibnamefont {{Sudarshan}}},\ }\bibfield  {title} {\bibinfo
  {title} {{``Meta'' Relativity}},\ }\href {https://doi.org/10.1119/1.1941773}
  {\bibfield  {journal} {\bibinfo  {journal} {American Journal of Physics}\
  }\textbf {\bibinfo {volume} {30}},\ \bibinfo {pages} {718} (\bibinfo {year}
  {1962})}\BibitemShut {NoStop}%
\bibitem [{\citenamefont {Newton}(1967)}]{PhysRev.162.1274}%
  \BibitemOpen
  \bibfield  {author} {\bibinfo {author} {\bibfnamefont {R.~G.}\ \bibnamefont
  {Newton}},\ }\bibfield  {title} {\bibinfo {title} {Causality effects of
  particles that travel faster than light},\ }\href
  {https://doi.org/10.1103/PhysRev.162.1274} {\bibfield  {journal} {\bibinfo
  {journal} {Phys. Rev.}\ }\textbf {\bibinfo {volume} {162}},\ \bibinfo {pages}
  {1274} (\bibinfo {year} {1967})}\BibitemShut {NoStop}%
\bibitem [{\citenamefont {Liberati}\ \emph {et~al.}(2002)\citenamefont
  {Liberati}, \citenamefont {Sonego},\ and\ \citenamefont
  {Visser}}]{LIBERATI2002167}%
  \BibitemOpen
  \bibfield  {author} {\bibinfo {author} {\bibfnamefont {S.}~\bibnamefont
  {Liberati}}, \bibinfo {author} {\bibfnamefont {S.}~\bibnamefont {Sonego}},\
  and\ \bibinfo {author} {\bibfnamefont {M.}~\bibnamefont {Visser}},\
  }\bibfield  {title} {\bibinfo {title} {Faster-than-c signals, special
  relativity, and causality},\ }\href
  {https://doi.org/https://doi.org/10.1006/aphy.2002.6233} {\bibfield
  {journal} {\bibinfo  {journal} {Annals of Physics}\ }\textbf {\bibinfo
  {volume} {298}},\ \bibinfo {pages} {167} (\bibinfo {year}
  {2002})}\BibitemShut {NoStop}%
\end{thebibliography}%
\end{document}